\def\BibTeX{{\rm B\kern-.05em{\sc i\kern-.025em b}\kern-.08em
    T\kern-.1667em\lower.7ex\hbox{E}\kern-.125emX}}
\pgfplotsset{compat=1.17}
\newtheorem{proposition}{Proposition}[section]
\newdimen\LineSpace
\tikzset{
    line space/.code={\LineSpace=#1},
    line space=3pt
}
\tikzset{cross/.style={cross out, draw, 
         minimum size=2*(#1-\pgflinewidth), 
         inner sep=0pt, outer sep=0pt}}
\begin{document}

\title{Joint Transmit Signal and Beamforming Design for Integrated Sensing and Power Transfer Systems
\thanks{This work was (partly) funded by the Deutsche Forschungsgemeinschaft (DFG, German Research Foundation) – SFB 1483 – Project-ID 442419336, EmpkinS.}
}

\author{
\IEEEauthorblockN{Kenneth MacSporran Mayer, Nikita Shanin, Zhenlong You, Sebastian Lotter, Stefan Brückner, Martin Vossiek,\\ Laura Cottatellucci, and Robert Schober}
\IEEEauthorblockA{Friedrich-Alexander-Universität Erlangen-Nürnberg, Germany}
}

\maketitle
\begin{abstract}
Integrating different functionalities, conventionally implemented as dedicated systems, into a single platform allows utilising the available resources more efficiently.
We consider an integrated sensing and power transfer (ISAPT) system and propose the joint optimisation of the rectangular pulse-shaped transmit signal and the beamforming vector to combine sensing and wireless power transfer (WPT) functionalities efficiently.
In contrast to prior works, we adopt an accurate non-linear circuit-based energy harvesting (EH) model.
We formulate and solve a non-convex optimisation problem for a general number of EH receivers to maximise a weighted sum of the average harvested powers at the EH receivers while ensuring the received echo signal reflected by a sensing target (ST) has sufficient power for estimating the range to the ST with a prescribed accuracy within the considered coverage region.
The average harvested power is shown to monotonically increase with the pulse duration when the average transmit power budget is sufficiently large.
We discuss the trade-off between sensing performance and power transfer for the considered ISAPT system. 
The proposed approach significantly outperforms a heuristic baseline scheme based on a linear EH model, which linearly combines energy beamforming with the beamsteering vector in the direction to the ST as its transmit strategy.  
\end{abstract}

\section{Introduction}\label{Section: Introduction}
The growing number of Internet-of-Things (IoT) applications, ranging from smart homes to healthcare, requires the deployment of a multitude of low-power IoT devices, which future wireless networks will have to serve by facilitating communication, sensing, computation, and a supply of power \cite{Saad20,Tong21}.
An efficient method for powering these low-power IoT devices through radio-frequency (RF) signals is known as wireless power transfer (WPT), which may even allow for battery-free operation of the devices \cite{Clerckx19}.
In light of the tremendous number of devices, the efficient utilisation of resources, such as energy, frequency spectrum, and hardware, is indispensable.
Moreover, powering the devices may require localising them first.
This especially applies to the crowded sub-6 GHz band, which is occupied by different systems
 \cite{Liu22-selected,Kaushik22}.
An effective method of utilising the scarce resources more efficiently is the integration of different functionalities, such as communication, sensing, and WPT, by co-designing them into one platform, instead of employing dedicated systems for each functionality.
Currently, the most prominent example of this approach is integrated sensing and communications (ISAC), which has received significant attention and is envisioned as a key technology for next-generation wireless systems \cite{Liu22}.
Hereby, sensing may be employed to acquire information on a target's location by considering the time delay between an emitted signal and the resulting echo signal reflected by the target \cite{Liu22, Skolnik81}.

Unlike ISAC, the integration of sensing and power transfer (ISAPT) has received significantly less attention, despite facilitating the decongestion of spectrum and the design of hardware-efficient systems by reducing the size, cost, and power consumption.
Only a few recent studies are available on ISAPT \cite{Yang23,Ping22,Li23,Chen23,Rezaei23}. 
In \cite{Yang23, Ping22}, the trade-off between WPT and sensing is investigated by optimising the transmit beamforming vector, whereby the authors of \cite{Ping22} assumed the energy harvesting (EH) receivers to be in the radiating near-field of the transmit antenna.
Moreover, in \cite{Li23, Chen23, Rezaei23}, triple-functionality systems are considered, integrating communication, WPT, and sensing into one system by either considering all three functionalities simultaneously \cite{Li23,Chen23}, or employing ISAPT in the downlink and transmitting data in the uplink \cite{Rezaei23}.

The WPT design in \cite{Yang23,Ping22,Li23,Chen23,Rezaei23} is based on a linear EH model.
Thus, these works do not capture the well-documented non-linear behaviour of practical EH circuits \cite{Clerckx19, Kim20}.
However, accurately taking this non-linear behaviour into account, e.g., by employing the circuit-based EH model in \cite{Morsi20}, is vital when the maximisation of \textit{harvested} power instead of \textit{received} power is desired \cite{Clerckx19, Kim20, Morsi20, Shanin22}.
In \cite{Yang23,Ping22,Li23,Chen23,Rezaei23}, the focus lies on optimising the beamforming design of the respective ISAPT systems while employing a transmit signal, which is favourable for one of the individual functionalities.
Specifically, a continuous wave pulse signal is utilised for facilitating both sensing and WPT in \cite{Rezaei23}, whereas random unit-variance signals are exploited in \cite{Yang23,Ping22,Li23,Chen23}.
Although EH receivers can opportunistically harvest power from a signal designed for radar sensing, this approach may not fully exploit the ISAPT system's potential, whereas jointly co-designing the transmit signal for both functionalities can offer performance benefits.
Indeed, in simultaneous wireless information and power transfer systems the optimal transmit signal represents a trade-off between the signals optimal for both communication and WPT \cite{Morsi20}.
For example, while the random Gaussian transmit signal employed in \cite{Yang23,Ping22,Li23,Chen23} is optimal for communication, it is highly suboptimal for WPT \cite{Morsi20}.
Moreover, the utilisation of random transmit signals for WPT and sensing prevents the prediction of the harvested power and the sensing performance in a given time slot, respectively.
In fact, for WPT, the transmission of rectangular pulse-shaped signals is typically assumed \cite{Morsi20,Shanin22}.
Moreover, the optimal transmit strategy for WPT has been investigated in \cite{Morsi20,Shanin22}, where on-off signaling was found to be optimal for single-user WPT systems.
This signaling strategy is inherently similar to a pulse time-delay radar transmitting a rectangular pulse during the on-period and subsequently awaiting the reception of the echo signal reflected by the target during the off-period \cite{Skolnik81}.
Therefore, in this paper, we propose to leverage this similarity to facilitate the efficient integration of WPT and sensing.


In this paper, we consider an ISAPT system comprising a multi-antenna transceiver (TRX), a single ST, and multiple EH receivers.
The considered system employs a round-trip time-of-flight pulse time-delay radar for sensing.
While the majority of the existing works focus on determining the angular direction of the ST with respect to the TRX \cite{Yang23,Ping22,Li23,Chen23}, the proposed ISAPT system is designed to ensure the range to the ST in a certain direction is estimated with a desired accuracy by considering the time delay between the emitted transmit signal and the received echo signal.
Our goal is the joint optimisation of the transmit signal and the transmit beamforming vector for ISAPT systems for integrating the sensing and WPT functionalities.
To this end, we formulate an optimisation problem to maximise the weighted sum of the average harvested powers at the EH receivers while ensuring a desired sensing performance and optimise the amplitude and duration of the rectangular transmit pulse in conjunction with the transmit beamforming vector.
In contrast to \cite{Yang23,Ping22,Li23,Chen23,Rezaei23}, we adopt the non-linear circuit-based EH model derived in \cite{Morsi20}, which more accurately characterises the power harvested in practical EH circuits compared to linear EH models \cite{Clerckx19, Kim20}.
We characterise the sensing performance by ensuring the minimum power of the echo signals received from the ST is sufficient for estimation of the range to the ST with a desired accuracy within the considered coverage region.
Guaranteeing a particular accuracy for estimation of the range to the ST and the adopted non-linear EH model are key differences to existing works.
Prior to the solution of the proposed non-convex optimisation problem, the feasibility region of the pulse duration is determined analytically.
We then propose a solution based on semidefinite relaxation (SDR) and successive convex approximation (SCA).
The pulse duration yielding the largest amount of average harvested power at the EH receivers is determined through a grid search.
Moreover, we discuss the trade-off between sensing performance and WPT and show that the proposed solution significantly outperforms a heuristic baseline scheme, which is based on a combination of the optimal radar and the optimal WPT transmit signals when assuming a linear EH model.
Hence, the results presented in this paper deliver novel insights for the efficient integration of sensing and WPT functionalities into one common platform.

\section{System Model}\label{Section: System Model}
In this paper, we consider an ISAPT system that comprises a TRX employing a uniform linear array (ULA) equipped with $N_\mathrm{t} \geq 1$ antennas, $M \geq 1$ single-antenna EH receivers, and a single ST.
In Section \ref{Section: Transmit Signal Model}, the transmit signal model is discussed and the ISAPT system's WPT and sensing functionalities are presented in Section \ref{Section: WPT System} and Section \ref{Section: Sensing System}, respectively.

\subsection{Transmit Signal Model}\label{Section: Transmit Signal Model}
The TRX emits a pulse-modulated RF signal with the following equivalent complex baseband (ECB) representation
\begin{align}\label{eq: transmit signal}
    \boldsymbol{x}(t) = \sum_{k^\prime = -\infty}^\infty \boldsymbol{x}[k^\prime] \psi (t-k^\prime T;\tau[k^\prime]) = \boldsymbol{x}[k] \psi (t-kT;\tau[k]),
\end{align}
where $k= \text{max}\left\{ y \in \mathbb{Z} \vert y \leq t/T \right\}$ with $\mathbb{Z}$ denoting the set of integers, $\boldsymbol{x}[k] \in \mathbb{C}^{N_\mathrm{t} \times 1}$ is the transmit signal vector in the $k$-th time slot, and $\psi(t;\tau[k])$ represents a rectangular pulse with magnitude $1$ and duration $\tau[k]$, i.e., $\psi(t;\tau[k])=1$ if $0 \leq t \leq \tau[k]$, and zero otherwise.
Furthermore, the duration of a time slot is denoted by $T > \tau[k]$.
We note that pulse-modulated signals as in \eqref{eq: transmit signal} are commonly utilised for the design of WPT systems \cite{Morsi20,Shanin22} and pulse radar systems \cite{Skolnik81}.
Leveraging this similarity offers a seamless integration of WPT and sensing functionalities, and thus, we exploit the pulse time-delay radar concept in the proposed ISAPT system.
We set, 
    $\boldsymbol{x}[k] = A[k] \boldsymbol{w}[k]$,
where $A[k] \in \mathbb{R}$ is the signal amplitude and $\boldsymbol{w}[k] \in \mathbb{C}^{N_\mathrm{t} \times 1}$ is the beamforming vector.
The energy of $\boldsymbol{w}[k]$ is normalised to unity, i.e., $\| \boldsymbol{w}[k] \|_2^2 = 1$, where $\| \cdot \|_2$ is the Euclidean norm.
Thus, $\tau[k]$, $A[k]$, and $\boldsymbol{w}[k]$ represent the optimisation variables for jointly optimising the duration, amplitude, and beamforming of the transmit signal, respectively.
The dependence of $\tau[k]$, $A[k]$, and $\boldsymbol{w}[k]$ on symbol interval $k$ is dropped in the notation in the remainder of this paper for better legibility.

\subsection{Wireless Power Transfer System}\label{Section: WPT System}
We assume a fading channel $\boldsymbol{h}_m \in \mathbb{C}^{N_\mathrm{t} \times 1}$ between the TRX and EH node $m$ with $m=1,\dots,M$. 
Moreover, perfect knowledge of the channel vectors $\boldsymbol{h}_m$, $\forall m=1,\dots,M$, is assumed at the TRX.
Consequently, the instantaneous power of the received signal at the $m$-th EH receiver in interval $[kT, kT + \tau]$ is given by
\begin{align}\label{eq: EH received power}
    P_m = P_m(A\boldsymbol{w}) = A^2 \vert \boldsymbol{h}_m^H \boldsymbol{w} \vert^2.
\end{align}
For WPT, we neglect the impact of additive noise due to its negligible contribution to the average harvested power \cite{Shanin22}.

Each EH receiver is equipped with a rectenna comprising an antenna, a matching circuit, a non-linear rectifier, such as a Schottky diode combined with a low-pass filter, and a load resistor \cite{Morsi20,Shanin22}.
We adopt the non-linear circuit-based EH model proposed in \cite{Morsi20}, which was derived by accurately analysing the current flow through the electrical EH circuit \cite{Morsi20}. 
The harvested power $\Tilde{\varphi}(P_m)$ from the pulse at the $m$-th EH receiver in time slot $k$ is given by 
$\Tilde{\varphi}(P_m) = \mathrm{min} \left\{ \varphi(P_m), \varphi(P_{\mathrm{max}}) \right\}$,
which is bounded due to the saturation of practical EH circuits caused by the breakdown of the employed Schottky diodes at high received powers \cite{Morsi20}.
Hereby, the non-linear monotonically increasing function $\varphi(P_m)$ models the EH circuit and is given by \cite{Morsi20,Shanin22}
\begin{align}\label{eq: EH non-linearity}
    \varphi(P_m) = \left[ \frac{1}{a} \mathrm{W}_0\left( a \mathrm{e}^a \mathrm{I}_0\left( C \sqrt{2 P_m} \right) \right) -1 \right]^2 I_\mathrm{s}^2 R_\mathrm{L},
\end{align}
where $\mathrm{W}_0(\cdot)$ and $\mathrm{I}_0(\cdot)$ are the principal branch of the Lambert-W function and the zeroth order modified Bessel function of the first kind, respectively. 
The parameters $a$ and $C$ depend on the employed EH circuit and are independent of the received signal \cite{Morsi20,Shanin22}. 
Moreover, $I_\mathrm{s}$ is the reverse bias saturation current of the diode and $R_\mathrm{L}$ is the resistance of the load.
Note that operating Schottky diodes in the breakdown regime should be avoided \cite[Remark 5]{Clerckx18}.
Therefore, we enforce $P_m \leq P_{\mathrm{max}}$ and consequently, $\Tilde{\varphi}(P_m) = \varphi(P_m)$.
The average harvested power at the $m$-th EH node in time slot $k$ is determined by averaging the power received during pulse duration $\tau$ over time slot $T$.
We consider the weighted sum of average harvested powers among all EH receivers during time slot $k$, which is given by 
\begin{align}\label{eq: sum harvested power}
    \phi(A\boldsymbol{w}, \tau) = \frac{\tau}{T} \sum_{m=1}^M \beta_m \varphi(P_m),
\end{align}
where $\beta_m \in [0,1]$, $\forall m=1,\dots,M$, is the weight associated with the $m$-th EH receiver such that $\sum_m \beta_m = 1$.
The $\beta_m$, $\forall m=1,\dots,M$, can be chosen to, e.g., ensure EH fairness among the EH nodes.

\subsection{Sensing System}\label{Section: Sensing System}
We define the location of an ST by the tuple $\left(R,\alpha \right)$, whereby $R$ is the range from the TRX to the ST and $\alpha$ is the angular direction of the ST.
The objective of the proposed ISAPT system is to ensure the range $R$ to the ST at angular direction $\alpha$ can be estimated to a desired accuracy.
Specifically, angular direction $\alpha$ is probed within a certain time frame $T_\mathrm{sen}$, whereby a meaningful echo signal is only received if a ST is present in direction $\alpha$.
The absence of a meaningful echo signal implies no ST is present at $\alpha$ and thus, a different angular direction $\alpha^{\prime}$ is explored.
Here, we focus on the case when a meaningful echo signal is detected from angular direction $\alpha$.
The duration of time frame $T_\mathrm{sen}$ is assumed fixed and determined by the characteristics of the TRX and upper-bounded by the coherence time $T_\mathrm{coh}$ of the fading channels between the TRX and the EH receivers, i.e., $T \leq T_\mathrm{sen} \leq T_\mathrm{coh}$.
In this paper, we assume $T_\mathrm{coh}$ and $T_\mathrm{sen}$ are on the order of milliseconds, whereas time slot duration $T$ is typically on the order microseconds to nanoseconds \cite{Skolnik81}, i.e., $T_\mathrm{sen} \gg T$.
The location of the ST is assumed to be quasi-static, i.e., the ST remains approximately static during $T_\mathrm{sen}$.
During $T_\mathrm{sen}$, the ISAPT system uses pulses from multiple time slots, each of duration $T$, to perform coherent pulse integration, which is discussed in detail in Section \ref{Section: Sensing accuracy}.

Next, we establish the relationship between the coverage range, within which an estimation of the range $R$ is desired, the pulse duration $\tau$, and the time slot duration $T$.


\subsubsection{Bounds on $\tau$ and $T$}\label{Section: bounds coverage range}
While transmitting, the TRX cannot receive a reflected echo signal, which is necessary for localising the ST \cite{Skolnik81}.
Therefore, an upper bound on pulse duration $\tau$ is necessary.
This is determined by the minimum range $R_\mathrm{min}$ the ST may be located at \cite{Skolnik81}, i.e.,
\begin{align}\label{eq: minimum distance constraint}
    \tau \leq \tau_\mathrm{max} = \frac{2\, R_\mathrm{min}}{c},
\end{align}
where $c$ is the speed of light.
The duration of a time slot $T$ is constrained by the maximum coverage range $R_\mathrm{max}$.
Specifically, the shortest possible time slot duration $T_\mathrm{min}$ to ensure a target is reached anywhere within the coverage region and the echo signal is received back when employing a very short rectangular pulse, i.e., $\tau \to 0$, is
    $T \geq T_\mathrm{min} = (2 \, R_\mathrm{max})/c$ \cite{Skolnik81}.
Moreover, since $R_\mathrm{min} \leq R \leq R_\mathrm{max}$, time slot duration $T$ can be set as follows
\begin{align}\label{eq: time slot}
    T(\tau) = T_\mathrm{min} + \tau = \frac{2 \, R_\mathrm{max}}{c} + \tau, 
\end{align}
where $T(\tau)$ exceeds the minimum required time $T_\mathrm{min}$ by the pulse duration $\tau$, which is yet to be optimised for ISAPT, such that the full echo signal is received before the next pulse is transmitted.

Next, we characterise the accuracy of estimating the range $R$ to guarantee the desired sensing performance within the coverage region $R_\mathrm{min} \leq R \leq R_\mathrm{max}$.

\subsubsection{Sensing accuracy}\label{Section: Sensing accuracy}
By ensuring that the minimum echo signal power $P_\mathrm{ST}$, i.e., the power of the echo received from the ST at $R_\mathrm{max}$, is sufficient for attaining a particular accuracy of estimating the range, the desired accuracy is guaranteed to be attained in the full coverage region, i.e., $R \leq R_\mathrm{max}$.
The minimum echo signal power $P_\mathrm{ST}$ during a time slot follows from the radar equation and is given by \cite{Skolnik81}
\begin{align}\label{eq: min echo power}
    \frac{\tau}{T(\tau)} P_\mathrm{ST} = \frac{\tau \, A^2}{T(\tau)} \vert \boldsymbol{u}^H \boldsymbol{w} \vert^2 \underbrace{\frac{\lambda^2 \sigma_\mathrm{RCS}}{(4 \pi)^3 R_\mathrm{max}^4} \| \boldsymbol{u} \|_2^2}_{=z_1},
\end{align}
where $\boldsymbol{u} = \left[1,\mathrm{e}^{q}, \mathrm{e}^{2q},\dots, \mathrm{e}^{(N_\mathrm{t}-1)q}\right]^T \in \mathbb{C}^{N_\mathrm{t} \times 1}$ with $q=-\mathrm{j}\pi\mathrm{sin}(\alpha)\Delta_\mathrm{TRX}$ denotes the vector of phase delays in direction $\alpha$, $\Delta_\mathrm{TRX}$ is the transmit antenna spacing, $\lambda$ is the wavelength, and $\sigma_\mathrm{RCS}$ is the radar cross-section (RCS) of the ST.
The receive beamformer is chosen according to direction $\alpha$ for maximising the signal power at the detector.

The received echo at the TRX\footnote{The study of additional disturbing influences, such as additional system and propagation losses, on the minimum echo signal power and thus, the system performance, is an interesting topic for future work. 
Here, we aim at determining the maximum achievable system performance, and therefore, we neglect these additional influences.} is impaired by additive white Gaussian noise (AWGN) with noise power $\sigma_\mathrm{n}^2$. 
In this paper, we consider range estimation based on the estimated time delay $T_R$ between the emission of the transmit signal and the reception of the echo signal \cite{Skolnik81}.
Hence, the sensing performance of the ISAPT system is characterised by the accuracy with which the system is capable of determining time delay $T_R$ \cite{Skolnik81}.
Since the estimated range is a linear function of $T_R$, the estimation error of the range, $\hat{R}$, defined as the root mean squared (RMS) error between the estimated range and the true range, is given by $\hat{R} = (\hat{T}_R\,c/2)$, where $\hat{T}_R$ is the RMS error of the estimated time delay $T_R$ \cite{Skolnik81}.
Hereby, the accuracy of estimating the time delay is limited by noise. 
Thus, a large signal-to-noise ratio (SNR) is required to obtain an accurate estimate of $T_R$ and thus, typically, the echoes of multiple pulses are exploited for the estimation of $T_R$ \cite{Skolnik81}.
To this end, the proposed ISAPT system employs coherent pulse integration.
Consequently, the total signal power at the TRX in time frame $T_\mathrm{sen}$ grows linearly with the number of pulses per $T_\mathrm{sen}$.
For maximum performance, $N(\tau) = \text{max}\left\{ y \in \mathbb{Z} \vert y \leq T_\mathrm{sen} / T(\tau) \right\}$ pulses are employed for estimating $T_R$ since the echo signals received in different time slots are added coherently, whereas the noise in different time slots is assumed to be uncorrelated \cite{Skolnik81}.
In the following, we assume $N(\tau) = T_\mathrm{sen} / T(\tau)$ for analytical tractability.
Moreover, the impact of multipath components is neglected since they arrive after the line-of-sight (LoS) echo signal.
As a result, for the considered rectangular pulse shape, $\hat{T}_R$ is given by \cite{Skolnik81}
\begin{align}\label{eq: time-delay rms}
    \hat{T}_R = \frac{1}{B} \sqrt{\underbrace{\frac{\sigma_\mathrm{n}^2}{4 \, T_\mathrm{sen}}}_{=z_2} \frac{[T(\tau)]^2}{\tau \, P_\mathrm{ST}}},
\end{align}
where $B$ is the bandwidth available to the ISAPT system.

\section{Problem Formulation and Proposed Solution}\label{Section: Problem Formulation}
\subsection{Problem Formulation}
The objective is to maximise the weighted sum of the average harvested powers at the EH receivers \eqref{eq: sum harvested power} while ensuring the desired accuracy of estimating the range $R$ by limiting the RMS error $\hat{R}$ to lie below a certain threshold $\hat{R}_\mathrm{max}$, which defines the maximum tolerated sensing error.
To this end, we jointly optimise the transmit beamforming and the duration and amplitude of the rectangular transmit pulse in symbol interval $k$.
Mathematically, this is formulated as the following optimisation problem
\begin{subequations}\label{P1: Original Problem}
\begin{alignat}{2}
&\underset{\tau, A, \boldsymbol{w}}{\text{maximise}}
&\qquad& \! \phi(A\boldsymbol{w},\tau)
\label{Objective1: Sum harvested power}\\
&\text{subject to} 
&& \text{C1: } z \, \sqrt{\frac{[T(\tau)]^2}{\tau \, A^2 \vert \boldsymbol{u}^H \boldsymbol{w} \vert^2}} \leq \hat{R}_\mathrm{max}, \label{con1: radar range accuary}\\
&&& \text{C2: } \frac{\tau}{T(\tau)} \, A^2 \| \boldsymbol{w} \|_2^2 \leq P_\mathrm{avg}, \label{con1: average transmit power} \\
&&& \text{C3: } A^2 \| \boldsymbol{w} \|_2^2 \leq P_\mathrm{p}, \label{con1: peak transmit power} \\
&&& \text{C4: } A^2 \vert \boldsymbol{h}_m^H \boldsymbol{w} \vert^2 \leq P_\mathrm{max}, \forall m=1\dots M, \label{con1: peak input power}  \\
&&& \text{C5: } 0 \leq \tau \leq \tau_\mathrm{max} \label{con1: tau feasibility}
\end{alignat}
\end{subequations}
where $z=(c \sqrt{z_2})/(2\, B\sqrt{z_1}) $ with $z_1$ and $z_2$ defined in \eqref{eq: min echo power} and \eqref{eq: time-delay rms}, respectively.
Moreover, we impose constraints on the average transmit power $P_\mathrm{avg}$ and the peak transmit power $P_\mathrm{p}$ in \eqref{con1: average transmit power} and \eqref{con1: peak transmit power}, respectively.
Note that \eqref{con1: peak input power} avoids the EH receivers operating in the breakdown regime by limiting the peak received power to $P_\mathrm{max}$.
Problem \eqref{P1: Original Problem} is non-convex due to the objective function \eqref{Objective1: Sum harvested power} and constraints C1, C2, C3, and C4.

\subsection{Feasibility Region of Problem \eqref{P1: Original Problem}}\label{Section: Feasibility region}
Prior to solving Problem \eqref{P1: Original Problem}, we determine the feasibility region $\mathcal{T}$ of $\tau$ analytically by taking into account the constraint on the sensing accuracy C1 and the upper bound imposed in C5.
\begin{proposition}\label{theorem: Feasibility region of pulse duration}
The pulse duration $\tau$ has the feasible region $\mathcal{T} = \left[ \tau_\mathrm{min}, \tau_\mathrm{max} \right]$, where the minimum pulse duration $\tau_\mathrm{min}$ is given by
\begin{align}
    \tau_\mathrm{min} = \frac{1}{2} \left( z_3 - z_4 - \sqrt{z_3^2 - 2z_3 z_4} \right)\mathrm{,}
\end{align}
with $z_3 =P_\mathrm{p} \| \boldsymbol{u} \|_2^2 \hat{R}_\mathrm{max}^2 / z^2 > 0$ and $z_4 = 4R_\mathrm{max}/c > 0$.
\end{proposition}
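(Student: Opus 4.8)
The plan is to turn the sensing-accuracy constraint C1 into a scalar inequality in $\tau$ alone and then intersect the resulting set with C5. Since C1 (with $A$ and $\boldsymbol{w}$ still free) couples the pulse duration, amplitude, and beamformer, the first step is to note that, for a fixed $\tau$, the left-hand side of C1 is strictly decreasing in $A^2|\boldsymbol{u}^H\boldsymbol{w}|^2$; hence C1 admits \emph{some} admissible pair $(A,\boldsymbol{w})$ exactly when it holds at the pair maximising $A^2|\boldsymbol{u}^H\boldsymbol{w}|^2$. Using $\|\boldsymbol{w}\|_2=1$, Cauchy--Schwarz gives $|\boldsymbol{u}^H\boldsymbol{w}|^2\le\|\boldsymbol{u}\|_2^2$, with equality for $\boldsymbol{w}=\boldsymbol{u}/\|\boldsymbol{u}\|_2$, while the peak-power budget C3 caps $A^2$ at $P_\mathrm{p}$; so the maximum is $P_\mathrm{p}\|\boldsymbol{u}\|_2^2$. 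Substituting this into C1 and squaring, C1 is feasible at $\tau$ if and only if
\begin{align}
    g(\tau)=\frac{[T(\tau)]^2}{\tau}\le\frac{P_\mathrm{p}\,\|\boldsymbol{u}\|_2^2\,\hat{R}_\mathrm{max}^2}{z^2}=z_3.
\end{align}

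Next I would insert $T(\tau)=\tau+2R_\mathrm{max}/c=\tau+z_4/2$ from \eqref{eq: time slot}, which rewrites $g(\tau)\le z_3$ as the quadratic inequality $\tau^2+(z_4-z_3)\tau+z_4^2/4\le 0$. This parabola opens upward, so its solution set is the closed interval between its two roots $\frac{1}{2}\bigl(z_3-z_4\pm\sqrt{(z_3-z_4)^2-z_4^2}\bigr)=\frac{1}{2}\bigl(z_3-z_4\pm\sqrt{z_3^2-2z_3z_4}\bigr)$; the smaller root is the claimed $\tau_\mathrm{min}$. I would then verify: (i) $\tau_\mathrm{min}\ge 0$, which follows from $z_3>z_4$ together with $(z_3-z_4)^2-(z_3^2-2z_3z_4)=z_4^2\ge 0$; (ii) the discriminant is positive and $z_3>2z_4$, which hold whenever Problem \eqref{P1: Original Problem} is feasible, since $g$ attains its global minimum $2z_4$ at $\tau=z_4/2$ and, because $\tau_\mathrm{max}=2R_\mathrm{min}/c<2R_\mathrm{max}/c=z_4/2$, we have $g(\tau_\mathrm{max})>2z_4$, so feasibility (which requires $g(\tau_\mathrm{max})\le z_3$) forces $z_3>2z_4$; and (iii) the larger root exceeds $z_4/2$ and hence lies above $\tau_\mathrm{max}$, so on the range $[0,\tau_\mathrm{max}]$ imposed by C5 the condition $g(\tau)\le z_3$ reduces to $\tau\ge\tau_\mathrm{min}$. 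Intersecting with C5 then yields $\mathcal{T}=[\tau_\mathrm{min},\tau_\mathrm{max}]$.

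The quadratic bookkeeping is routine; the step I expect to be the real obstacle is the first one --- arguing that feasibility of the coupled constraint C1 collapses, after optimising out $A$ and $\boldsymbol{w}$ via Cauchy--Schwarz and the peak-power cap, to the single scalar condition $g(\tau)\le z_3$ --- together with the observation that the admissible interval $[0,\tau_\mathrm{max}]$ sits entirely below $z_4/2$ (because $R_\mathrm{min}<R_\mathrm{max}$), so that only the smaller quadratic root is active and the feasible region is a single interval anchored at $\tau_\mathrm{max}$.
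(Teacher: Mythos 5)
Your proposal is correct and follows essentially the same route as the paper: reduce C1 to a scalar condition in $\tau$ by taking $A=\sqrt{P_\mathrm{p}}$ and $\boldsymbol{w}=\boldsymbol{u}/\|\boldsymbol{u}\|_2$ (the paper asserts this choice directly, you justify it via monotonicity and Cauchy--Schwarz under C3), then solve the resulting quadratic $\tau^2+(z_4-z_3)\tau+z_4^2/4\le 0$ and keep the smaller root. Your additional checks --- deriving $z_3\ge 2z_4$ from feasibility rather than assuming realness of the roots, and showing the larger root exceeds $z_4/2>\tau_\mathrm{max}$ so that only $\tau\ge\tau_\mathrm{min}$ is active on $[0,\tau_\mathrm{max}]$ --- are refinements of steps the paper's appendix leaves implicit, not a different argument.
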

\begin{proof}
    The proof is provided in Appendix \ref{appendix A}.
\end{proof}



\textit{Remark:} Note that the pulse duration $\tau \in \mathcal{T}$ impacts the feasibility region of $A$ and $\boldsymbol{w}$ as these optimisation variables are coupled in Problem \eqref{P1: Original Problem}.
The non-trivial relationship of the optimisation variables is investigated in Section \ref{Section: Results and Performance Evaluation}.

\vspace*{-0.25cm}
\subsection{Problem Reformulation and Proposed Solution}\label{Section: Solution}
Obtaining a solution to Problem \eqref{P1: Original Problem} is challenging due to the non-convexity of the problem and the coupling of optimisation variables.
Our proposed approach for solving Problem \eqref{P1: Original Problem} is based on a one-dimensional grid search over $\tau$ and the application of SDR and SCA at every point of the grid.
To perform a one-dimensional grid search over $\tau$, the interval $\mathcal{T}$ is discretised into a set of $n_{\tau}$ equally-spaced values and we denote this set by $\mathcal{T}_{n_{\tau}}$.

First, we decouple optimisation variables $A$ and $\boldsymbol{w}$ from $\tau$ and solve Problem \eqref{P1: Original Problem} for $A_\tau$ and $\boldsymbol{w}_\tau$ at every point on the grid, i.e., $\tau \in \mathcal{T}_{n_{\tau}}$.
In the following, we propose an approach for obtaining the solution $A^*_\tau$ and $\boldsymbol{w}^*_\tau$, which maximises the objective function \eqref{Objective1: Sum harvested power}, for a given pulse duration $\tau \in \mathcal{T}_{n_{\tau}}$.
To this end, we apply the change of variables $\boldsymbol{v} = A_\tau \boldsymbol{w}_\tau$.
Next, we define matrix variable $\boldsymbol{V} = \boldsymbol{v} \boldsymbol{v}^H$ which, by construction, is Hermitian and has a rank of $1$, i.e., $\mathrm{rank}(\boldsymbol{V})=1$.
Note that $\boldsymbol{V}$ is a positive semi-definite (PSD) matrix, i.e., $\boldsymbol{V} \succcurlyeq \boldsymbol{O}$, where $\boldsymbol{O} \in \mathbb{R}^{N_\mathrm{t} \times N_\mathrm{t}}$ denotes the all-zero matrix.
This yields the following equivalent reformulation of Problem \eqref{P1: Original Problem} for a fixed value of $\tau \in \mathcal{T}_{n_{\tau}}$
\begin{subequations}\label{P2: Reformulated Problem}
\begin{alignat}{2}
&\underset{\boldsymbol{V} \succcurlyeq \boldsymbol{O}}{\text{maximise}} & \qquad & \! \Phi(\boldsymbol{V}) \label{Objective2: Sum harvested power}\\
& \text{subject to} & & \widehat{\text{C1}}\text{: }\varepsilon_1(\tau) \leq \mathrm{Tr}\left\{\boldsymbol{U} \boldsymbol{V} \right\}, \label{con2: radar range accuary}\\
& & & \widehat{\text{C2/3}}\text{: } \mathrm{Tr}\left\{
\boldsymbol{V} \right\} \leq \varepsilon_2(\tau), \label{con2: combined con} \\
& & & \widehat{\text{C4}}\text{: } \mathrm{Tr}\left\{\boldsymbol{H}_m \boldsymbol{V} \right\} \leq P_\mathrm{max}, \forall m=1\dots M, \label{con2: peak input power} \\
& & & \widehat{\text{C6}}\text{: } \mathrm{rank}(\boldsymbol{V}) = 1, \label{con2: rank one}
\end{alignat}
\end{subequations}
where $\Phi(\boldsymbol{V}) = (\tau / (T(\tau)) \sum_{m=1}^M \varphi(\mathrm{Tr}\left\{\boldsymbol{H}_m \boldsymbol{V} \right\})$ with $\boldsymbol{H}_m = \boldsymbol{h}_m \boldsymbol{h}_m^H$, $\forall m=1\dots M$, $\boldsymbol{U} = \boldsymbol{u} \boldsymbol{u}^H$, $\varepsilon_1(\tau) = (z^2 \, [T(\tau)]^2)/(\tau \,\hat{R}_\mathrm{max}^2) > 0$, and $\varepsilon_2(\tau) = \text{min}\{ (T(\tau)/\tau) \, P_\mathrm{avg}, P_\mathrm{p}\} > 0$.
Note that Problem \eqref{P2: Reformulated Problem} is non-convex due to the objective function \eqref{Objective2: Sum harvested power} and $\widehat{\text{C6}}$. 
We propose the application of SCA to solve Problem \eqref{P2: Reformulated Problem} \cite{Sun17}.
Furthermore, $\widehat{\text{C6}}$ is dropped.
In Proposition \ref{Proposition: Rank One Optimality}, we will show that the solution in every iteration of the SCA algorithm satisfies $\widehat{\text{C6}}$ implicitly.
In iteration $i\geq 0$ of the SCA algorithm, we construct the following lower bound of the objective function
\begin{align}\label{eq: underestimate objective}
    \Phi(\boldsymbol{V}) \geq \hat{\Phi}(\boldsymbol{V}, \boldsymbol{V}^{i}),
\end{align}
with 
\begin{align}\label{eq: underestimate objective 2}
\resizebox{\hsize}{!}{%
   $\hat{\Phi}(\boldsymbol{V}, \boldsymbol{V}^{i}) = \Phi(\boldsymbol{V}^{i}) + 
   \frac{\tau}{T(\tau)}
   \left(\sum_{m=1}^M \varphi^{\prime} \left(\mathrm{Tr}\left\{\boldsymbol{H}_m \boldsymbol{V}^{i} \right\} \right) \mathrm{Tr}\left\{\boldsymbol{H}_m\boldsymbol{V} - \boldsymbol{H}_m\boldsymbol{V}^{i} \right\} \right)\mathrm{.}$}
\end{align}
Here, $\varphi^{\prime} \left(\cdot \right)$ denotes the derivative of $\varphi(\cdot)$ with respect to the input power evaluated at $\mathrm{Tr}\left\{\boldsymbol{H}_m \boldsymbol{V}^i \right\}$ and $\boldsymbol{V}^{i}$ is the solution obtained in the $i$-th iteration of the algorithm. 
We utilise $\boldsymbol{V}^{0} = (P_\mathrm{p}/\| \boldsymbol{u} \|_2^2) \left(\boldsymbol{u}\boldsymbol{u}^H\right)$ as a feasible initialisation of the algorithm to satisfy the desired sensing accuracy.
Consequently, the optimisation problem solved in every iteration $i$ of the algorithm is given by
\begin{equation}\label{eq: P2-relaxed-SCP}
    \boldsymbol{V}^{i+1} = \underset{\boldsymbol{V} \succcurlyeq 0}{\text{argmax}} \,\,\,
    \hat{\Phi}(\boldsymbol{V}, \boldsymbol{V}^{i})\quad
    \text{subject to} \quad \widehat{\text{C1}}, \widehat{\text{C2/3}}, \widehat{\text{C4}},
\end{equation}
which is a convex optimisation problem that can be solved efficiently with, for example, CVXPY \cite{diamond2016cvxpy}.
Next, we show that the solution of Problem \eqref{eq: P2-relaxed-SCP} yields a rank-one matrix.


\begin{proposition}\label{Proposition: Rank One Optimality}
    In each iteration $i\geq 0$, the solution $\boldsymbol{V}^{i+1}$ of Problem \eqref{eq: P2-relaxed-SCP} satisfies $\mathrm{rank}(\boldsymbol{V}^{i+1}) = 1$. 
\end{proposition}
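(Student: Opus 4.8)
The plan is to attack Problem \eqref{eq: P2-relaxed-SCP} through its Karush--Kuhn--Tucker (KKT) system. The first observation is that, for a fixed $\boldsymbol{V}^{i}$, the coefficients $\varphi^\prime(\mathrm{Tr}\{\boldsymbol{H}_m\boldsymbol{V}^{i}\})$ are fixed nonnegative scalars, so $\hat{\Phi}(\cdot,\boldsymbol{V}^{i})$ is \emph{affine} in $\boldsymbol{V}$, equal to $\mathrm{Tr}\{\boldsymbol{G}_i\boldsymbol{V}\}$ up to a $\boldsymbol{V}$-independent constant, where $\boldsymbol{G}_i := \frac{\tau}{T(\tau)}\sum_{m=1}^{M}\varphi^\prime(\mathrm{Tr}\{\boldsymbol{H}_m\boldsymbol{V}^{i}\})\boldsymbol{H}_m \succcurlyeq \boldsymbol{O}$. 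Together with the affine constraints $\widehat{\text{C1}}$, $\widehat{\text{C2/3}}$, $\widehat{\text{C4}}$ this makes \eqref{eq: P2-relaxed-SCP} a convex SDP, and since Slater's condition holds for $\tau$ in the interior of $\mathcal{T}$ (a slightly shrunk and identity-regularised copy of $\boldsymbol{V}^{0}$ is strictly feasible by Proposition \ref{theorem: Feasibility region of pulse duration}), the KKT conditions characterise optimality. I would introduce multipliers $\lambda_1,\lambda_2\geq 0$ for $\widehat{\text{C1}}$ and $\widehat{\text{C2/3}}$, $\mu_m\geq 0$ for the $m$-th inequality in $\widehat{\text{C4}}$, and $\boldsymbol{Y}\succcurlyeq\boldsymbol{O}$ for $\boldsymbol{V}\succcurlyeq\boldsymbol{O}$; this setup is uniform in $i\geq 0$, so no iteration needs separate treatment.

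Stationarity of the Lagrangian then gives $\boldsymbol{Y} = \lambda_2\boldsymbol{I} + \sum_{m=1}^{M}\mu_m\boldsymbol{H}_m - \boldsymbol{G}_i - \lambda_1\boldsymbol{U}$, and complementary slackness for the conic constraint gives $\boldsymbol{Y}\boldsymbol{V}^{i+1}=\boldsymbol{O}$. Hence $\mathrm{range}(\boldsymbol{V}^{i+1})\subseteq\ker(\boldsymbol{Y})$, so $\mathrm{rank}(\boldsymbol{V}^{i+1})\leq N_\mathrm{t}-\mathrm{rank}(\boldsymbol{Y})$; and since $\widehat{\text{C1}}$ with $\varepsilon_1(\tau)>0$ forces $\mathrm{Tr}\{\boldsymbol{U}\boldsymbol{V}^{i+1}\}\geq\varepsilon_1(\tau)>0$, one has $\boldsymbol{V}^{i+1}\neq\boldsymbol{O}$, i.e., $\mathrm{rank}(\boldsymbol{V}^{i+1})\geq 1$. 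The whole claim therefore reduces to showing $\mathrm{rank}(\boldsymbol{Y})=N_\mathrm{t}-1$, which — as $\mathrm{rank}(\boldsymbol{V}^{i+1})=1$ is exactly $\widehat{\text{C6}}$ — simultaneously proves that the dropped rank constraint is met implicitly.

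To get at $\mathrm{rank}(\boldsymbol{Y})$, I would right-multiply the stationarity identity by $\boldsymbol{V}^{i+1}$ and use $\boldsymbol{Y}\boldsymbol{V}^{i+1}=\boldsymbol{O}$, obtaining $\boldsymbol{M}\boldsymbol{V}^{i+1} = \lambda_1\boldsymbol{u}\,(\boldsymbol{u}^H\boldsymbol{V}^{i+1})$, where $\boldsymbol{M} := \lambda_2\boldsymbol{I} + \sum_{m=1}^{M}\mu_m\boldsymbol{H}_m - \boldsymbol{G}_i = \boldsymbol{Y}+\lambda_1\boldsymbol{U}\succcurlyeq\boldsymbol{O}$ and the right-hand side has rank at most one. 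If $\boldsymbol{M}\succ\boldsymbol{O}$, then $\boldsymbol{V}^{i+1}=\lambda_1\boldsymbol{M}^{-1}\boldsymbol{u}\,(\boldsymbol{u}^H\boldsymbol{V}^{i+1})$, whose column space lies in $\mathrm{span}\{\boldsymbol{M}^{-1}\boldsymbol{u}\}$, so $\mathrm{rank}(\boldsymbol{V}^{i+1})\leq 1$; with the lower bound above, $\mathrm{rank}(\boldsymbol{V}^{i+1})=1$. (Equivalently, $\boldsymbol{M}\succ\boldsymbol{O}$ makes $\boldsymbol{Y}=\boldsymbol{M}-\lambda_1\boldsymbol{U}$ a rank-one downdate of a nonsingular matrix, so $\mathrm{rank}(\boldsymbol{Y})\geq N_\mathrm{t}-1$; note $\lambda_1>0$ is then automatic, since $\lambda_1=0$ would give $\boldsymbol{Y}=\boldsymbol{M}\succ\boldsymbol{O}$ and hence $\boldsymbol{V}^{i+1}=\boldsymbol{O}$.)

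The hard part will be proving $\boldsymbol{M}\succ\boldsymbol{O}$. Writing $\boldsymbol{x}^H\boldsymbol{M}\boldsymbol{x}=\boldsymbol{x}^H\boldsymbol{Y}\boldsymbol{x}+\lambda_1|\boldsymbol{u}^H\boldsymbol{x}|^2$ with both terms nonnegative, any $\boldsymbol{x}\in\ker\boldsymbol{M}$ must satisfy $\boldsymbol{Y}\boldsymbol{x}=\boldsymbol{O}$, $\lambda_1\boldsymbol{u}^H\boldsymbol{x}=0$, and $\boldsymbol{M}\boldsymbol{x}=\boldsymbol{O}$, so it suffices to establish: (i) $\lambda_2>0$, i.e., the average-power constraint binds — which I would argue from $\mathrm{Tr}\{\boldsymbol{G}_i\}>0$ ($\varphi$ strictly increasing), so that the affine objective strictly increases along the direction $\boldsymbol{I}$ and hence $\widehat{\text{C2/3}}$ must bind unless every inequality in $\widehat{\text{C4}}$ binds, the latter degenerate case being handled by passing to a minimum-trace optimal solution; and (ii) that the only $\boldsymbol{x}$ lying simultaneously in $\ker\boldsymbol{Y}$, in $\boldsymbol{u}^{\perp}$, and — using (i), which turns $\boldsymbol{M}\boldsymbol{x}=\boldsymbol{O}$ into $\lambda_2\boldsymbol{x}\in\mathrm{span}\{\boldsymbol{h}_1,\dots,\boldsymbol{h}_M\}$ — in $\mathrm{span}\{\boldsymbol{h}_1,\dots,\boldsymbol{h}_M\}$, is $\boldsymbol{x}=\boldsymbol{O}$. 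For (ii) I would exploit that $\lambda_1>0$ (so $\widehat{\text{C1}}$ is active and the optimiser genuinely allocates power to $\boldsymbol{u}$) together with the non-degeneracy of the ULA steering vector $\boldsymbol{u}$ relative to the EH channel subspace; if that non-degeneracy turns out to be unavoidable, it should be stated as an explicit mild assumption on $\{\boldsymbol{h}_1,\dots,\boldsymbol{h}_M,\boldsymbol{u}\}$. Everything else in the argument is routine manipulation of the KKT relations, so I anticipate that step (ii) — pinning down the kernel of the dual slack $\boldsymbol{Y}$ — is where essentially all the difficulty lies.
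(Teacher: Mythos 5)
Your KKT setup, the identity $\check{\boldsymbol{Y}}\boldsymbol{V}^{i+1}=\boldsymbol{O}$, the observation that $\boldsymbol{V}^{i+1}\neq\boldsymbol{O}$ by $\widehat{\text{C1}}$, and the reduction of the claim to $\mathrm{rank}(\check{\boldsymbol{Y}})=N_\mathrm{t}-1$ all coincide with the paper's argument. However, the proposal has a genuine gap precisely at that reduced claim: you do not prove $\boldsymbol{M}=\check{\boldsymbol{Y}}+\lambda_1\boldsymbol{U}\succ\boldsymbol{O}$, you only sketch it and explicitly defer the decisive step (ii) (``where essentially all the difficulty lies''), possibly at the price of an extra assumption on $\{\boldsymbol{h}_1,\dots,\boldsymbol{h}_M,\boldsymbol{u}\}$. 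Moreover, your step (i) is itself shaky: if the per-receiver peak-power constraints $\widehat{\text{C4}}$ saturate the linearised objective $\mathrm{Tr}\{\boldsymbol{G}_i\boldsymbol{V}\}=(\tau/T(\tau))\sum_m\varphi^{\prime}(\cdot)\,\mathrm{Tr}\{\boldsymbol{H}_m\boldsymbol{V}\}$ (each term is capped by $P_\mathrm{max}$), then $\widehat{\text{C2/3}}$ can be slack at every optimum, forcing $\lambda_2=0$ by complementary slackness; ``passing to a minimum-trace optimal solution'' does not repair this, because the multipliers are determined by the dual problem, not by which optimal primal point you select. With $\lambda_2=0$, the range of $\boldsymbol{M}$ lies in $\mathrm{span}\{\boldsymbol{h}_1,\dots,\boldsymbol{h}_M\}$, so for $N_\mathrm{t}>M$ the matrix $\boldsymbol{M}$ is necessarily singular and your main branch ($\boldsymbol{M}\succ\boldsymbol{O}$, hence $\boldsymbol{V}^{i+1}$ proportional to $\boldsymbol{M}^{-1}\boldsymbol{u}\,\boldsymbol{u}^H\boldsymbol{V}^{i+1}$) is unavailable.

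The paper closes exactly this hole by a different device. Writing $\check{\boldsymbol{Y}}=\check{\xi}\boldsymbol{I}-\boldsymbol{Z}$ with $\boldsymbol{Z}=\check{\mu}\boldsymbol{U}-\sum_{m}\bigl(\check{\mu}_m-(\tau/T(\tau))\varphi^{\prime}(\mathrm{Tr}\{\boldsymbol{H}_m\boldsymbol{V}^{i}\})\bigr)\boldsymbol{H}_m$ (your $\boldsymbol{Z}=\lambda_1\boldsymbol{U}+\boldsymbol{G}_i-\sum_m\mu_m\boldsymbol{H}_m$), it runs a trichotomy on $\check{\xi}$ against the largest eigenvalue $\delta_\mathrm{max}$ of $\boldsymbol{Z}$: $\check{\xi}>\delta_\mathrm{max}$ makes $\check{\boldsymbol{Y}}$ nonsingular and forces the infeasible $\boldsymbol{V}^{i+1}=\boldsymbol{O}$; $\check{\xi}<\delta_\mathrm{max}$ contradicts $\check{\boldsymbol{Y}}\succcurlyeq\boldsymbol{O}$; hence $\check{\xi}=\delta_\mathrm{max}$, and the simplicity of $\delta_\mathrm{max}$ (which holds with probability $1$ over the random channels -- the genericity you correctly sensed would be needed somewhere) gives $\mathrm{rank}(\check{\boldsymbol{Y}})=N_\mathrm{t}-1$, after which Sylvester's rank inequality finishes the proof. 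Note that this route never needs $\check{\xi}>0$ strictly, which is exactly the point where your sketch runs into trouble. To complete your proposal you would have to either prove your step (ii) together with a correct version of (i), or switch to the eigenvalue-trichotomy argument; as written, the proposition is reduced to an unproven lemma rather than proved.
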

\begin{proof}
    The proof is provided in Appendix \ref{appendix B}.
\end{proof}

The beamforming vector $\boldsymbol{w}_{\tau}^*$ and the signal amplitude $A_{\tau}^*$ are obtained as the dominant normalised eigenvector and the square root of the corresponding eigenvalue of the rank-one matrix $\boldsymbol{V}^{i}$ obtained in the final iteration of the algorithm, respectively.
The proposed algorithm is summarised in Algorithm \ref{alg: Approach}. 
Note that the proposed algorithm converges to a stationary point of \eqref{P1: Original Problem} for $\mathcal{T}_{n_{\tau}}$ \cite{Lanckriet09}. 
The computational complexity of a single iteration of the algorithm is $\mathcal{O}\left(M N_\mathrm{t}^{3.5} + M^2 N_\mathrm{t}^{2.5} + M^3 N_\mathrm{t}^{0.5} \right)$, where $\mathcal{O}(\cdot)$ denotes the big-O notation \cite{Shanin22}.
Lastly, the optimal pulse duration is found by evaluating $\tau^* ={\text{argmax}}_{\substack{\tau}} \,\,\, \phi(A^*_\tau \boldsymbol{w}^*_\tau, \tau)$.

\begin{algorithm}
\caption{Procedure for obtaining $\tau^*, A^*, \boldsymbol{w}^*$}\label{alg: Approach}
\begin{algorithmic}
\State\textbf{Initialise: }{$\boldsymbol{V}^{0} = \frac{P_\mathrm{p}}{\| \boldsymbol{u} \|_2^2} \left(\boldsymbol{u}\boldsymbol{u}^H\right)$, $i=0$, $h^{0}=0$, $h^{-1}=2\, \epsilon_\mathrm{SCA}$ with $\epsilon_\mathrm{SCA}$ denoting the tolerance.}
\For{$\tau \in \mathcal{T}_{n_{\tau}}$}
\While{$\vert h^{i} - h^{i-1} \vert > \epsilon_\mathrm{SCA}$}
\State{Obtain $\boldsymbol{V}^{i+1}$ by solving \eqref{eq: P2-relaxed-SCP} for $\boldsymbol{V}^{i}$}
\State{Determine $h^{i+1} = \Phi(\boldsymbol{V}^{i+1})$ and set $i=i+1$}
\EndWhile
\State{$\boldsymbol{w}_{\tau}^*$ is the dominant normalised eigenvector of $\boldsymbol{V}^{i}$.}
\State{$A^*_\tau$ is the square root of the eigenvalue associated \\\textcolor{white}{......}with $\boldsymbol{w}_{\tau}^*$.}
\EndFor
\State{Determine $\tau^* = {\text{argmax}}_{\substack{\tau}} \,\,\, \phi(A^*_\tau \boldsymbol{w}^*_\tau, \tau)$}
\State\textbf{Output: }{$\tau^*, A^* = A^*_{\tau^*}, \boldsymbol{w}^* = \boldsymbol{w}^*_{\tau^*}$}
\end{algorithmic}
\end{algorithm}

\section{Results and Performance Evaluation}
\label{Section: Results and Performance Evaluation}
In the following, we consider an ISAPT system with $M=3$ EH nodes with weights $\beta_m = 1/M$, $\forall m=1,\dots,M$, which are located at a distance of $5$ m from the TRX at $45\degree$, $60\degree$, and $75\degree$, respectively. 
The channel $\boldsymbol{h}_m$, $\forall m=1,\dots,M$, between the TRX and EH node $m$ is modelled as a Rician fading channel.  
Thereby, we calculate the respective path loss as $\lambda^2 / (4 \pi R_m)^2$, where $R_m$ is the distance to the $m$-th EH node, and assume Rician K-factor $\kappa=1$.
Moreover, the ST is located at $\alpha = -60\degree$.
We assume $\sigma_\mathrm{RCS}=1$ $\text{m}^2$, which is a typical value for several applications and, e.g., approximately the RCS of a human that may be equipped with a wearable device \cite{Skolnik81}. 
Table \ref{tab: simulation parameters} provides all relevant simulation parameters.
All results are averaged over $100$ EH channel realisations.
\vspace*{-0.15cm}
\setlength{\tabcolsep}{2pt}
\begin{table}[h]
\caption{Simulation parameters.}
\label{tab: simulation parameters}
\centering
\footnotesize   
\begin{tabular}{l|l|l}
General parameters & WPT parameters & Sensing parameters  \\ 
\hline
$P_\mathrm{avg} \in \{0.1,0.5\}$ W  & $a = 1.29$  & $B=10$ MHz \\
$P_\mathrm{p} \in \{0.5,1\}$ W & $C = 1.55 \cdot 10^3$ & $\sigma_\mathrm{n}^2 = -80$ dBm \\
$\lambda = 0.125$ m & $I_s = 5$ µA & $R_\mathrm{max} = 20$ m \\
Channel realisations: $100$ & $R_L = 10$ k$\Omega$ & $R_\mathrm{min} \in \{3,5,18\}$ m \\
$N_\mathrm{t} = 10$ & $P_\mathrm{max} = 25$ µW & $\hat{R}_\mathrm{max} \in [0.01, 0.06]$ m \\
$\epsilon_\mathrm{SCA} = 1 \cdot 10^{-7}$ & $M=3$ & $\sigma_\mathrm{RCS}$ = $1$ $\text{m}^2$ \\
$n_{\tau}=50$ & Rician K-factor: $1$ & $T_\mathrm{sen} = 1$ ms \\
$\Delta_\mathrm{TRX} = \lambda/2$ & $T_\mathrm{coh} = 1$ ms &
\end{tabular}
\vspace*{-0.5cm}
\end{table}

\subsection{Optimal Pulse Duration}\label{Section: optimal pulse duration}
First, we investigate the average amount of harvested power \eqref{eq: sum harvested power} for all $\tau \in \mathcal{T}_{n_{\tau}}$, i.e., from the smallest to the largest value of $\mathcal{T}$, thereby determining the optimal pulse duration $\tau^*$.
To this end, we set $\hat{R}_\mathrm{max} = 0.02$ m, $R_\mathrm{min} = 18$ m, and $P_\mathrm{p} = 0.5$ W and investigate the average amount of harvested power \eqref{eq: sum harvested power} versus $\tau \in \mathcal{T}_{n_{\tau}}$ for $P_\mathrm{avg} = 0.1$ W and $P_\mathrm{avg} = 0.5$ W, respectively, in Fig. \ref{fig: Optimal_Tau}.
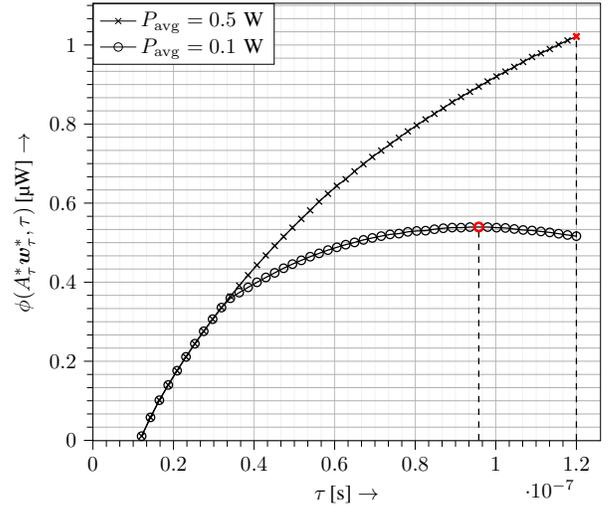
\begin{figure}[t]
    \centering
    \scalebox{0.8}{
    \definecolor{black}{RGB}{0,0,0}
\definecolor{bluish-green}{RGB}{0,158,115}
\definecolor{darkgray176}{RGB}{176,176,176}

\begin{tikzpicture}
\begin{axis}[
scale only axis,
tick align=outside,
tick pos=left,
unbounded coords=jump,
xlabel={$\tau \,\text{[s]} \rightarrow$},
xmin=0, xmax=1.259899051726205e-07,
xtick style={color=black},
ylabel={$\phi(A^*_\tau \boldsymbol{w}^*_\tau, \tau) \,\text{[µW]} \rightarrow$},
x tick scale label style={yshift=4pt},
y grid style={darkgray176},
ymin=-0.001,
ymax=1.10548121836406,
ytick style={color=black},
grid=both,
grid style={line width=.1pt, draw=gray!10},
major grid style={line width=.2pt,draw=gray!50},
minor tick num=5,
legend cell align=left,
legend pos=north west,
legend style={
    at={(0,1)},
    legend columns=1,
    legend entries={$P_\mathrm{avg}=0.5$ W, $P_\mathrm{avg}=0.1$ W}
    }
]
\addlegendimage{mark=x,black};
\addlegendimage{mark=o,black};

\addplot [mark=x,
semithick,
black]
table{%
x  y
1.20854589233017e-08 0.0105543736648982
1.42877964962956e-08 0.0579723754697767
1.64901340692894e-08 0.10148151861251
1.86924716422833e-08 0.139974390211485
2.08948092152771e-08 0.176156056942083
2.30971467882709e-08 0.21108494103143
2.52994843612648e-08 0.244407187342341
2.75018219342586e-08 0.275848578287492
2.97041595072525e-08 0.306410115714261
3.19064970802463e-08 0.335796790255513
3.41088346532402e-08 0.364069810974975
3.6311172226234e-08 0.391605666800934
3.85135097992278e-08 0.417756510695194
4.07158473722217e-08 0.443178497083286
4.29181849452155e-08 0.467697224721319
4.51205225182094e-08 0.491951098604464
4.73228600912032e-08 0.515102406739282
4.9525197664197e-08 0.537858351149504
5.17275352371909e-08 0.560042053695387
5.39298728101847e-08 0.581693272403086
5.61322103831786e-08 0.603949907110143
5.83345479561724e-08 0.623858502083382
6.05368855291663e-08 0.643789026169484
6.27392231021601e-08 0.660430600703136
6.49415606751539e-08 0.680374038976849
6.71438982481478e-08 0.69864359956389
6.93462358211416e-08 0.716676695293457
7.15485733941355e-08 0.732845750871136
7.37509109671293e-08 0.748978199964812
7.59532485401232e-08 0.765320791060556
7.8155586113117e-08 0.78151269094099
8.03579236861108e-08 0.796550453875794
8.25602612591047e-08 0.811706129007339
8.47625988320985e-08 0.825933208358391
8.69649364050924e-08 0.839994155156295
8.91672739780862e-08 0.855404707496351
9.136961155108e-08 0.868333115299961
9.35719491240739e-08 0.881646282914094
9.57742866970677e-08 0.894887507213601
9.79766242700616e-08 0.907755184948181
1.00178961843055e-07 0.920398102105253
1.02381299416049e-07 0.932608681066601
1.04583636989043e-07 0.944593366061215
1.06785974562037e-07 0.957014751340991
1.08988312135031e-07 0.969682092531286
1.11190649708025e-07 0.978981292570737
1.13392987281018e-07 0.989892558166551
1.15595324854012e-07 1.00094175033605
1.17797662427006e-07 1.01147685824085
1.2e-07 1.02150551956007
};

\addplot [mark=o,
semithick,
black]
table{%
x  y
1.20854589233017e-08 0.0105543736648982
1.42877964962956e-08 0.0579723754697767
1.64901340692894e-08 0.10148151861251
1.86924716422833e-08 0.139974390211485
2.08948092152771e-08 0.176156056942083
2.30971467882709e-08 0.21108494103143
2.52994843612648e-08 0.244407187342341
2.75018219342586e-08 0.275848578287492
2.97041595072525e-08 0.306410115714261
3.19064970802463e-08 0.335796790255513
3.41088346532402e-08 0.359464621227767
3.6311172226234e-08 0.37360659107542
3.85135097992278e-08 0.386907458237972
4.07158473722217e-08 0.39973265602769
4.29181849452155e-08 0.41201227854956
4.51205225182094e-08 0.423570052057753
4.73228600912032e-08 0.435275712635746
4.9525197664197e-08 0.445709026415789
5.17275352371909e-08 0.455224004412808
5.39298728101847e-08 0.464525749126044
5.61322103831786e-08 0.4728654585385
5.83345479561724e-08 0.481169266524595
6.05368855291663e-08 0.4882255185374
6.27392231021601e-08 0.494914126152846
6.49415606751539e-08 0.500633088678763
6.71438982481478e-08 0.506982518505368
6.93462358211416e-08 0.511923705414028
7.15485733941355e-08 0.516558555464247
7.37509109671293e-08 0.520862122654858
7.59532485401232e-08 0.522853586131858
7.8155586113117e-08 0.527393919379342
8.03579236861108e-08 0.5295969176386
8.25602612591047e-08 0.530303161740121
8.47625988320985e-08 0.533921678706804
8.69649364050924e-08 0.53642957138311
8.91672739780862e-08 0.538115900457247
9.136961155108e-08 0.537978783318337
9.35719491240739e-08 0.539374009500866
9.57742866970677e-08 0.539512647686132
9.79766242700616e-08 0.53937924215572
1.00178961843055e-07 0.5379316006687
1.02381299416049e-07 0.537206426879762
1.04583636989043e-07 0.534542160060663
1.06785974562037e-07 0.532051689572896
1.08988312135031e-07 0.530797654630793
1.11190649708025e-07 0.52833344558908
1.13392987281018e-07 0.525905537760574
1.15595324854012e-07 0.522627536597195
1.17797662427006e-07 0.519446822814869
1.2e-07 0.516461556274798
};

\addplot [mark=o,
very thick,
red]
table{%
x  y
9.57742866970677e-08 0.539512647686132
};

\addplot [mark=x,
very thick,
red]
table{%
x  y
1.2e-07 1.02150551956007
};

\draw [dashed, semithick] (9.57742866970677e-08,0) -- (9.57742866970677e-08, 0.539512647686132);

\draw [dashed, semithick] (1.2e-07,0) -- (1.2e-07,1.02150551956007);



\end{axis}

\end{tikzpicture}
    }
    \vspace*{-1mm}
    \caption{Average harvested power $\phi(A^*_\tau \boldsymbol{w}^*_\tau, \tau)$ for $\tau \in \mathcal{T}_{n_\tau}$. The optimal pulse duration $\tau^*$ is highlighted in red.}
    \vspace*{-2mm}
    \label{fig: Optimal_Tau}
\end{figure}
Next, the non-trivial relationship among the optimisation variables is studied.
When $P_\mathrm{avg}$ is large, i.e., $P_\mathrm{avg} = 0.5$ W, the average harvested power increases with $\tau \in \mathcal{T}_{n_\tau}$, and thus, $\tau^* = \tau_\mathrm{max} = 1.2 \cdot 10^{-7}$ s, which is highlighted by the red cross in Fig. \ref{fig: Optimal_Tau}.
Note that $\tau^* / T(\tau^*) \approx 47\%$.
In fact, constraint C2 is not tight in this case and the solution of Problem \eqref{P1: Original Problem} is determined by the sensing accuracy constraint C1.
However, this behaviour does not apply for low $P_\mathrm{avg}$, i.e., $P_\mathrm{avg} = 0.1$ W, as, in this case, constraint C2 becomes increasingly important.
Then, as $\tau$ grows, we first observe an increase of the harvested power up to $\tau^* \approx 0.96 \cdot 10^{-7}$ s, which is highlighted by the red circle in Fig. \ref{fig: Optimal_Tau}, followed by a decrease of harvested power.
In this case, $\tau^* / T(\tau^*) \approx 38\%$.
Consequently, determining the optimal pulse duration $\tau^*$ of the ISAPT system is important to maximise the harvested power at the EH nodes.

\vspace*{-0.1cm}
\subsection{Performance Evaluation and Trade-Off Between Sensing Accuracy and Harvested Power}
Next, the trade-off between sensing performance and WPT is investigated.
We compare the performance of our proposed solution to a baseline scheme, which linearly combines the optimal transmit signal strategy for WPT when assuming a linear EH model, i.e., energy beamforming \cite{Zhang13}, with the optimal radar sensing signal, i.e., the beamsteering vector towards the ST. 
The weighting factor for the linear combination of the two signals is set to satisfy constraint C1 with equality, such that only the minimum required sensing accuracy is achieved and the remaining transmit power is dedicated to WPT.
We utilise the optimal pulse duration for the proposed scheme and the baseline scheme, which are both obtained through the grid search over $\tau$.

For the results in Fig. \ref{fig: Big_Result}, we set $P_\mathrm{avg}=0.5$ W and investigate the trade-off between the average harvested power \eqref{eq: sum harvested power} and the sensing accuracy $\hat{R}_\mathrm{max}$ for different $R_\mathrm{min}$ and $P_\mathrm{p}$. 
\begin{figure}[t]
    \centering
    \scalebox{0.8}{
    \input{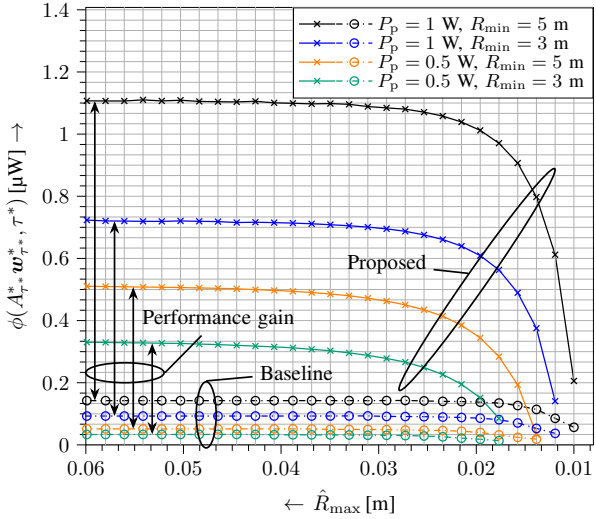}
    }
    \vspace*{-1mm}
    \caption{Average harvested power $\phi(A^*_{\tau^*} \boldsymbol{w}^*_{\tau^*}, \tau^*)$ vs. $\hat{R}_\mathrm{max}$. Solid lines with crosses correspond to the proposed solution, while dash-dotted lines with circles indicate the baseline scheme.}
    \vspace*{-2mm}
    \label{fig: Big_Result}
\end{figure}
We observe from Fig. \ref{fig: Big_Result} that the proposed solution (solid lines with crosses) outperforms the baseline scheme (dash-dotted lines with circles) by a considerable margin indicated by the arrows in Fig. \ref{fig: Big_Result}.
For larger $P_\mathrm{p}$, the ISAPT system can achieve higher sensing accuracy and average harvested power.
The same observation is made for larger $R_\mathrm{min}$ since a longer pulse duration $\tau$ is admissible for larger $R_\mathrm{min}$, which follows from \eqref{eq: minimum distance constraint}.
Furthermore, we observe a trade-off between average harvested power \eqref{eq: sum harvested power} and sensing accuracy.
In fact, for a smaller tolerated maximum range estimation error $\hat{R}_\mathrm{max}$, more power has to be transmitted towards the ST to attain a stronger echo signal, and thus, the average harvested power decreases.
Thus, the joint optimisation of the transmit signal and beamforming design taking into account the non-linearity of EH circuits, is essential for efficient ISAPT.

\vspace*{-0.1cm}
\section{Conclusion}\label{Section: Conclusion}
In this paper, we jointly optimised the rectangular transmit pulse and the beamforming vector for an ISAPT system with multiple EH nodes and a single ST while considering a non-linear circuit-based EH model.
Hereby, the objective was to maximise the weighted sum of the average harvested powers at the EH receivers
in a time slot while ensuring the required sensing performance for estimating the range to the ST with a desired accuracy.
We determined the feasible region of the pulse duration analytically and proposed a low-complexity solution to the non-convex problem based on a one-dimensional grid search, SDR, and SCA.
Analytically, the solution at each iteration of the SCA algorithm was shown to be a rank-one matrix, thus yielding the optimal beamformer and amplitude of the transmit signal, respectively.
The proposed low-complexity solution significantly outperforms a baseline scheme, the transmit strategy of which is based on a linear combination of energy beamforming and the radar beamsteering vector.
This underscores the importance of accurately modelling of the non-linear EH circuit for efficient ISAPT system design.
A generalisation of the problem to multiple STs or the incorporation of additional communication links present interesting topics for future work.

\appendices

\section{}\label{appendix A}
    A lower bound on pulse duration $\tau$ is obtained by assuming that the tolerated accuracy level in C1 is achieved with equality and all the available power is utilised for ST localisation, i.e.,
    $A=\sqrt{P_\mathrm{p}}$ and $\boldsymbol{w} = \boldsymbol{u} / \| \boldsymbol{u} \|_2$.
    This leads to the following condition for the lower bound $\tau_\mathrm{min}$
    \begin{align}\label{eq: tau min equation}
        \frac{z^2 \, [T(\tau_\mathrm{min})]^2}{\tau_\mathrm{min} \,\hat{R}_\mathrm{max}^2} = P_\mathrm{p} \| \boldsymbol{u} \|_2^2\mathrm{.}
    \end{align}
    Rewriting \eqref{eq: tau min equation}, yields a quadratic equation for $\tau_\mathrm{min}$ as follows
    \begin{align}\label{eq: tau min equation 2}
        \tau_\mathrm{min}^2 + \tau_\mathrm{min} \left[ z_4 - z_3 \right] + \frac{4 \, R_\mathrm{max}^2}{c^2} = 0,
    \end{align}
    with $z_3 =(P_\mathrm{p} \| \boldsymbol{u} \|_2^2 \hat{R}_\mathrm{max}^2)/z^2 > 0$ and $z_4 = (4R_\mathrm{max})/c > 0$, which has the following two solutions
    \begin{align}\label{eq: tau min 1}
       \tau_\mathrm{min}^{(\text{I}),(\text{II})} = \frac{1}{2} \left( z_3 - z_4 \pm \sqrt{z_3^2 - 2z_3 z_4} \right)\mathrm{.}
    \end{align}
    The smallest feasible $\tau_\mathrm{min}$ is given by the minimum of $\tau_\mathrm{min}^{(\text{I})}$ and $\tau_\mathrm{min}^{(\text{II})}$, which has to be positive for any $z_3$ and $z_4$ to satisfy C1.
    In the following, we show that $\tau_\mathrm{min} = \frac{1}{2} \left( z_3 - z_4 - \sqrt{z_3^2 - 2z_3 z_4} \right) > 0$ holds.
    Since $\tau_\mathrm{min}$ is real-valued, $z_3^2 \geq 2z_3z_4$ must hold in \eqref{eq: tau min 1}, and this in turn implies $z_3 > z_3 /2 \geq z_4 > 0$, i.e., $z_3-z_4 >0$.
    Moreover,   
    \begin{align}
    \resizebox{\hsize}{!}{%
        $z_3 - z_4 > \sqrt{z_3^2 - 2z_3 z_4}
        \iff z_3^2 - 2z_3z_4 + z_4^2 > z_3^2 - 2z_3 z_4\mathrm{,}$}
    \end{align}
    holds by definition since $z_4>0$, which concludes the proof.

\vspace*{-0.1cm}
\section{}\label{appendix B}
\vspace*{-0.1cm}
    The gap between Problem \eqref{eq: P2-relaxed-SCP} and its dual problem is equal to zero since strong duality holds \cite{Boyd04}.
    The Lagrangian of Problem \eqref{eq: P2-relaxed-SCP} is given by
    \begin{align}\label{eq: Lagrangian P3}
        \mathcal{L} = &- \hat{\Phi}(\boldsymbol{V}, \boldsymbol{V}^{i}) -\mu \mathrm{Tr}\left\{\boldsymbol{U} \boldsymbol{V} \right\} + \xi \mathrm{Tr}\left\{\boldsymbol{V} \right\} \nonumber \\ &+ \sum_{m=1}^M \mu_m \mathrm{Tr}\left\{\boldsymbol{H}_m \boldsymbol{V} \right\} - \mathrm{Tr}\left\{\boldsymbol{Y} \boldsymbol{V} \right\} + \gamma,
    \end{align}
    where $\mu$, $\xi$, and $\mu_m$, $\forall m=1,\dots,M$, are the Lagrangian multipliers related to constraints $\widehat{\text{C1}}$, $\widehat{\text{C2/3}}$, and $\widehat{\text{C4}}$, respectively.
    Moreover, $\boldsymbol{Y}$ is the Lagrangian multiplier associated with the constraint restricting $\boldsymbol{V}$ to a PSD matrix and $\gamma$ accounts for all terms not involving $\boldsymbol{V}$.
    We note that the Karush-Kuhn-Tucker (KKT) conditions are satisfied for the optimal solution $\boldsymbol{V}^{i+1}$ of Problem \eqref{eq: P2-relaxed-SCP} and $\check{\mu}, \check{\xi}, \check{\mu}_m$, $\forall m=1,\dots,M$, and $\check{\boldsymbol{Y}}$ are the solutions of the dual problem of Problem \eqref{eq: P2-relaxed-SCP}.
    The KKT conditions are given by
    \begin{subequations}\label{eq: KKT}
    \begin{align}
        &\nabla_{\boldsymbol{V}} \mathcal{L} = \boldsymbol{O} \label{eq: Lagrangian gradient}\\
        &\check{\mu} \geq 0, \check{\xi} \geq 0, \check{\mu}_m \geq 0, \forall m=1,\dots,M, \check{\boldsymbol{Y}} \succcurlyeq 0 \label{eq: non-neg con} \\
        &\check{\boldsymbol{Y}} \check{\boldsymbol{V}} = \boldsymbol{O} \label{eq: comp slack},
        \end{align}
    \end{subequations}
    where $\nabla_{\boldsymbol{V}} \mathcal{L}$ is the gradient of $\mathcal{L}$ with respect to $\boldsymbol{V}$ and the matrix $\boldsymbol{O} \in \mathbb{R}^{N_\mathrm{t} \times N_\mathrm{t}}$ denotes the all-zero matrix.
    From \eqref{eq: Lagrangian gradient}, we obtain $\check{\boldsymbol{Y}} = \check{\xi} \boldsymbol{I} - \boldsymbol{Z}$,
    where 
    $\boldsymbol{I} \in \mathbb{R}^{N_\mathrm{t} \times N_\mathrm{t}}$ is the identity matrix of size $N_\mathrm{t}$, and $\boldsymbol{Z} = \check{\mu} \boldsymbol{U} - \sum_{m=1}^M \left(\check{\mu}_m - (\tau / T(\tau)) \varphi^{\prime} \left(\mathrm{Tr}\left\{\boldsymbol{H}_m \boldsymbol{V}^{i} \right\} \right)\right) \boldsymbol{H}_m$.
    Note that by definition of $\boldsymbol{U}$ and $\boldsymbol{H}_m$, $\forall m=1, \dots, M$, $\boldsymbol{Z}$ is a Hermitian matrix, i.e., $\boldsymbol{Z}^H = \boldsymbol{Z}$, and thus, it can be expressed as $\boldsymbol{Z} = \boldsymbol{P} \boldsymbol{\Lambda} \boldsymbol{P}^H$, where the columns of unitary matrix $\boldsymbol{P}$ are the eigenvectors of $\boldsymbol{Z}$ and $\boldsymbol{\Lambda}$ is a diagonal matrix containing the corresponding real-valued eigenvalues.
    Consequently, $\check{\boldsymbol{Y}}$ is given by
    \begin{align}\label{eq: Lagrangian grad}
        \check{\boldsymbol{Y}} = \boldsymbol{P} \left( \check{\xi} \boldsymbol{I} - \boldsymbol{\Lambda} \right) \boldsymbol{P}^H.
    \end{align}
    
    Let $\delta_\mathrm{max}$ denote the largest eigenvalue of $\boldsymbol{Z}$, which, with probability $1$, has algebraic multiplicity $1$ due to the randomness of the channel.
    If $\check{\xi} > \delta_\mathrm{max}$, then $\check{\boldsymbol{Y}}$ is invertible and thus, $\mathrm{rank}(\check{\boldsymbol{Y}}) = N_\mathrm{t}$.
    Consequently, $\boldsymbol{V}^{i+1} = \boldsymbol{O}$ with $\mathrm{rank}(\boldsymbol{V}^{i+1}) = 0$, which follows from \eqref{eq: comp slack}.
    However, $\boldsymbol{V}^{i+1} = \boldsymbol{O}$ is infeasible since it violates constraint $\widehat{\text{C1}}$.
    If $\check{\xi} < \delta_\mathrm{max}$, then at least one eigenvalue of $\check{\boldsymbol{Y}}$ is negative, which implies that $\check{\boldsymbol{Y}}$ is not a PSD matrix, thereby contradicting \eqref{eq: non-neg con}.
    Consequently, $\check{\xi} = \delta_\mathrm{max} \geq 0$ must hold, which implies $\mathrm{rank}(\check{\boldsymbol{Y}}) = N_\mathrm{t}-1$.
    The application of Sylvester's rank inequality to \eqref{eq: comp slack} yields
    \begin{align}\label{eq: rank proof}
        0 = \mathrm{rank}(\check{\boldsymbol{Y}} \boldsymbol{V}^{i+1}) \geq \mathrm{rank}(\check{\boldsymbol{Y}}) + \mathrm{rank}(\boldsymbol{V}^{i+1}) - N_\mathrm{t},
    \end{align}
    which implies $1 \geq \mathrm{rank}(\boldsymbol{V}^{i+1})$. 
    Consequently, the optimal solution $\boldsymbol{V}^{i+1}$ to Problem \eqref{eq: P2-relaxed-SCP} satisfies $\mathrm{rank}(\boldsymbol{V}^{i+1}) = 1$.
    

\printbibliography

\end{document}